  \providecommand\BibTeX{{%
    \normalfont B\kern-0.5em{\scshape i\kern-0.25em b}\kern-0.8em\TeX}}}
\definecolor{keyword}{HTML}{2771a3}
\definecolor{pattern}{HTML}{b53c2f}
\definecolor{string}{HTML}{be681c}
\definecolor{relation}{HTML}{7e4894}
\definecolor{variable}{HTML}{107762}
\definecolor{comment}{HTML}{8d9094}
\newcommand{\keepcomment}{1} 
    \newcommand{\stkout}[1]{\ifmmode\text{\sout{\ensuremath{#1}}}\else\sout{#1}\fi}
    \newcommand{\todoi}[1]{\todo[inline]{#1}}
	\newcommand{\todoi}[1]{\leavevmode\ignorespaces\unskip}
\begin{document}

\title[A graph-database approach to assess the impact of demand-responsive services on public transit accessibility]{A graph-database approach to assess the impact of demand-responsive services on public transit accessibility}

\settopmatter{authorsperrow=4}

\author{Cathia Le Hasif}
\email{cathia.lehasif@gmail.com}
\affiliation{%
 \institution{Université d'Évry}
\country{France}
}

\author{Andrea Araldo}
\email{araldo@imtbs-tsp.eu}
\affiliation{%
  \institution{Télécom SudParis, IP Paris}
  \country{France}
}

\author{Stefania Dumbrava}
\email{stefania.dumbrava@ensiie.fr}
\affiliation{%
  \institution{ENSIIE, IP Paris}
  \country{France}
}

\author{Dimitri Watel}
\email{dimitri.watel@ensiie.fr}
\affiliation{%
  \institution{ENSIIE, IP Paris}
  \country{France}
}

\renewcommand{\shortauthors}{Araldo et al.}

\lstdefinelanguage{cypher}
{
	morekeywords={
		MATCH, OPTIONAL, WHERE, NOT, AND, OR, XOR, RETURN, DISTINCT, ORDER, BY, ASC, ASCENDING, DESC, DESCENDING, UNWIND, AS, UNION, WITH, ALL, CREATE, CALL, YIELD, DELETE, DETACH, REMOVE, SET, MERGE, SET, SKIP, LIMIT, IN, ON, CASE, WHEN, THEN, ELSE, END, INDEX, DROP, UNIQUE, CONSTRAINT, EXPLAIN, PROFILE, START,
	}
}

\newcommand{\mycdots}{\cdot\!\cdot\!\cdot}
\lstset{language=cypher,
	literate=*
	{...}{$\mycdots$}{1}
	{theta}{$\theta$}{1}
}

\begin{abstract}
This paper proposes an approach to analyze the impact of multimodal Public Transit (PT), combining conventional fixed-route transit and Demand-Responsive Transit (DRT), on equality in transport accessibility distribution. 
We construct a graph model of multimodal PT in Neo4j, based on General Transit Feed Specification (GTFS) data, modeling DRT analytically (using continuous approximation). We quantify service quality on any location using an \emph{accessibility} measure, indicating how easily other places or opportunities can be reached. We quantify inequality of accessibility distribution. We state the problem of allocating a DRT fleet to minimize inequality and show its NP-completeness. We showcase our approach on the transportation network of the French town of Royan.

\end{abstract}

\begin{CCSXML}
<ccs2012>
   <concept>
       <concept_id>10010405.10010481.10010485</concept_id>
       <concept_desc>Applied computing~Transportation</concept_desc>
       <concept_significance>500</concept_significance>
       </concept>
   <concept>
       <concept_id>10002951.10002952.10002953.10010146</concept_id>
       <concept_desc>Information systems~Graph-based database models</concept_desc>
       <concept_significance>500</concept_significance>
       </concept>
       <concept>
       <concept_id>10002950.10003624.10003625.10003630</concept_id>
       <concept_desc>Mathematics of computing~Combinatorial optimization</concept_desc>
       <concept_significance>500</concept_significance>
       </concept>
 </ccs2012>
\end{CCSXML}

\ccsdesc[500]{Applied computing~Transportation}
\ccsdesc[500]{Information systems~Graph-based database models}
\ccsdesc[500]{Mathematics of computing~Combinatorial optimization}

\keywords{transportation, optimization, graph databases}

\maketitle

\section{Introduction}
Effective mobility shapes the quality of life and enables economic development of urban conurbations. At the same time, transport is one of the most pollutant human activities and urgent actions are needed in order to achieve the ambitious zero-emission goals that several (supra)national authorities set for the following years. 
While Public Transit (PT) plays a pivotal role to guarantee sustainable and effective mobility in urban conurbations, the quality of service is unequally distributed~\cite{Badeanlou2022}, keeping a large part of the population with poor PT service, thus forcing them to rely on private cars. Complementing conventional PT, solely based on fixed lines, with other more flexible modes can relieve this issue~\cite{araldo2022pooling,Calabro2021,Araldo2019a}. In this paper, we consider a Demand-Responsive Transit (DRT) service, in which vehicles adapt their routes to user requests. Despite the fact that DRT has been shown to be an effective complement of PT in low-demand areas~\cite{QUADRIFOGLIO2009,Calabro2021}, no studies exist, to the best of our knowledge, to quantify its impact in the reduction of inequality of PT accessibility. \emph{Ours is a first preliminary work toward an open data-driven computational approach to enable such quantification}. 

To efficiently and scalably analyze PT, we resort to graph databases systems, which have been developed to address the ever-growing need for processing highly interconnected data, in a wide variety of areas. We leverage these to implement our formal model, to visualize and explore transport graphs, as well as to extract patterns and compute relevant metrics, using custom graph queries and algorithms. We summarize our contributions as follows:


\begin{itemize}
    \item We formalize an optimization for allocating DRT vehicles with the objective of reducing  accessibility distribution inequality and we show its NP-completeness.
    \item We model multi-modal transit, consisting of conventional fixed lines and DRT (represented analytically via continuous approximation). The model is a single graph, amenable to simple computation of accessibility computation, via well-known graph-algorithms.
    \item We implement the above model in the Neo4j graph database \cite{neo4j:website} as a property graph, built based on open data (GTFS) \cite{gtfs:website}. 
\end{itemize}
Note that \emph{our proposed approach is generic} and can be used to analyze any city for which the corresponding GTFS data is available. For illustrative purposes, we have chosen the city of Royan as use-case. The size of its PT graph is large enough to be representative, while maintaining tractability ($\approx$ 18.8K nodes and $\approx$ 7.3K arcs) and its structure has all the elements needed to observe the impact that deploying DRT services has on the accessibility distribution.
While the paper presents preliminary results, we are working on extending it into a usable tool for PT planners (\S\ref{sec:conclusion}).


\section{Model}
\label{sec:model}


Table~\ref{tab:notation} provides the key notations and values used henceforth.

\begin{table}[]
    \centering
    \begin{scriptsize}
    \setlength{\tabcolsep}{2.5pt}
    \begin{tabular}{|l|l|l|l|}
    \hline
        \textbf{Symbol} & \textbf{Description} &
        \textbf{Values}
    \\
    \hline
        $\mathcal{V},\mathcal{A},\omega(\cdot)$ & 
        Nodes, arcs, weights
        & $|\mathcal{V}|$=18892, $|\mathcal{A}|$=73714, $\omega(\cdot) \in [0,38400]$
    \\
    \hline
        $\xi$ & Tessellation size & 1km
    \\
    \hline
        $\tau$ & Walk distance (all walk arcs$\le\tau$ )
        & 1km
    \\
    \hline
        $v^w$ & Walk speed & 3km/h
    \\
    \hline
        $L,W$ & Height and width of the DRT area
        & L = 4km, W = 2km
    \\
    \hline
        $\mathcal{S}, \mathcal{S}^\text{DRT}$
        & Stops [served by DRT] & $\mathcal{S}$ = 749, $\mathcal{S}^\text{DRT}$ = 20
    \\
    \hline
        $\mathcal{C}, \mathcal{C}(s)$
        & Centroids [in the DRT region of $s$] & $\mathcal{C}$ = 1320, $\mathcal{C}(s)$ = 16
    \\
    \hline
        $\mathcal{D}(c)$ & Potential destinations of $c$
        &$\left\{ s\in\mathcal{S} | 2\text{km} \le d(c,s) \le 10\text{km} \right\}$
    \\
    \hline
        $\rho$ & Demand density & $26$ pax/km$^2$-h
    \\
    \hline
        $t_0$ & Considered departure time
        & 8h30
    \\
    \hline
        $M$ & Avg. no. of needed DRT vehicles
        & $M=21.0$
    \\
    \hline
    \end{tabular}
    \setlength{\belowcaptionskip}{-2.5pt}
    \caption{Notation and numerical values}
    \label{tab:notation}
    \end{scriptsize}
\vspace{-3.2em}
\end{table}

We compute accessibility on a generic weighted directed graph $G=(\mathcal{V},\mathcal{A},\omega(\cdot))$, where $\mathcal{V}$ is the set of nodes, $\mathcal{A}$ - the set of arcs, and $\omega(\cdot)$ - the weight function. We partition $\mathcal{V}$ into: a set $\mathcal{C}$ of \emph{centroids}, from which we assume all passengers start, a set $\mathcal{S}$ of physical stops and, for each $s \in \mathcal{S}$, a set $\mathcal{ST}(s)$ corresponding to the service of the stop, from a specific transit line, at a specific time (called hereinafter a \emph{stoptime}). Further details are given in \S\ref{sec:graph}. 


Within the set $\mathcal{A}$, we distinguish the arcs $\mathcal{A}^\text{w}$, which represent a walk connection between any centroid $c$ and a close stoptime $v$. 

In addition, a subset of stops $\mathcal{S}^\text{DRT}$ can be served by a DRT service deployed in a limited region close to each stop. We call $\mathcal{C}(s)$ the set of centroids within that region, for any $s\in\mathcal{S}^\text{DRT}$. The set of arcs $\mathcal{A}^\text{DRT} \subset \mathcal{A}$ represent DRT connections between each centroid in $\mathcal{C}(s)$ and stoptimes $v\in\mathcal{ST}(s)$. For any DRT-arc $a\in\mathcal{A}^\text{DRT}$, the weight $\omega(a)$ depends on the number $q$ of DRT vehicles deployed around the corresponding stop.
The formulas for computing the graph's weights are described in \S\ref{sec:graph}.

\subsection{Accessibility and inequality}
\label{sec:accessibility}
We evaluate the inequality in the distribution of PT \emph{accessibility}~\cite{Miller2019}. The accessibility of a location indicates how easy it is to reach other locations. For ease of representation, we focus on its opposite, i.e., the \emph{inaccessibility}. Let $T(c, v)$ be the weight of the shortest path from a centroid $c$ to a stop in $\mathcal{D}(c)\subseteq\mathcal{S}$ - a set of potential destinations (Table~\ref{tab:notation}). The inaccessibility of the centroid $c \in \mathcal{C}$ is 
$
ia(c) = \frac{1}{|\mathcal{D}(c)|}\cdot\left( \sum\limits_{s \in \mathcal{D}(c)} T(c, s)\right)
$.
To quantify inequality, we compute the Palma index~\cite{palma2011homogeneous}. We denote with $\mathcal{C}_{10}^\text{worst}$ the set of 10\% of centroids suffering from the largest inaccessibility and with $\mathcal{C}_{40}^\text{best}$ the set of 40\% of centroids enjoying the lowest inaccessibility, we compute inequity as 
$
  I(G, \{\omega(a)\}_{a\in\mathcal{A}}) = 
  \sum\limits_{c \in \mathcal{C}_{10}^\text{worst} } ia(c) / \sum\limits_{c \in \mathcal{C}_{40}^\text{best} } ia(c)
$.

This value can be obtained in polynomial time, by adapting the Dijkstra algorithm to compute $T(c, v)$, for each pair of centroid and stop. 
Note that the more the accessibility values are equal, the less $I(G, \omega)$ is. Our objective is then to minimize the value of $I(G, \omega)$, without diminishing the accessibility of any centroid. To do so, we have one lever: we are given $Q$ routing vehicles (DRT buses, for instance) that can be used to decrease the time needed to reach stop $s$ from a centroid. 

\subsection{Problem Statement}
We now state the problem of assigning DRT vehicles to different regions, in order to minimize inequality (\S\ref{sec:accessibility}), and show its NP-completeness. We will explore resolution heuristics in future work.

Let $f: \mathcal{A}^{DRT} \times \mathbb{N} \rightarrow \mathbb{N}$ be a function that can be computed in polynomial time. By allocating $q$ DRT vehicles to the stop $s \in \mathcal{S}^{DRT}$, the weight of every arc $(c, v) \in \mathcal{A}^{DRT}$, from any centroid $c \in\mathcal{C}(s)$ to any stoptime $\mathcal{ST}(s)$, is reduced by $f((c,v), q)$. This increases the accessibility of those centroids. We define the assignment vector $\mathbf{q}=(q_s, s \in \mathcal{S}^\text{DRT})\in\mathbb{N}^{|\mathcal{S}^\text{DRT}|}$,
which allocates $q_s$ vehicles to each $s \in \mathcal{S}^\text{DRT}$.
Let $\{\omega_\mathbf{q}(a)\}_{a\in\mathcal{A}}$ be the weights obtained from $\mathbf{q}$. Our optimization problem can thus be formulated as follows.

\newcommand{\inequality}{I\left(G, \{\omega_\mathbf{q}(a)\}_{a\in\mathcal{A}} \right)}

\begin{definition}[Problem (EQ)]
\label{def:problem}
Given a graph $G = (\mathcal{V}, \mathcal{A})$, a weight function $\omega$ and an integer $Q$, find the assignment $\mathbf{q}=(q_s, s \in \mathcal{S}^\text{DRT})$ satisfying$\sum\limits_{s \in \mathcal{S}^\text{DRT}} q_s \leq Q$ and minimizing 
$\inequality$.
\end{definition}

\paragraph{NP-Completeness} We now prove the NP-Completeness of this problem in the general case, using a polynomial reduction from the knapsack problem, in which, given two sets $(d_1, d_2, \dots, d_n)$ and $(u_1, u_2, \dots, u_n)$ of integers and two integers $D$ and $U$, we search for $J \subset \llbracket 1; n \rrbracket$, such that $\sum_{i \in J} u_i \geq U$ and $\sum_{i \in J}d_i \leq D$. 
\begin{theorem}
    The decision version of (EQ) is NP-Complete.
\end{theorem}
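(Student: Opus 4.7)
The plan is to first observe that the decision version of (EQ) --- ``does there exist $\mathbf{q}\in\mathbb{N}^{|\mathcal{S}^\text{DRT}|}$ with $\sum_{s\in\mathcal{S}^\text{DRT}} q_s \le Q$ and $I(G,\{\omega_\mathbf{q}(a)\}_{a\in\mathcal{A}}) \le K$?'' --- lies in NP. A candidate $\mathbf{q}$ has polynomial bit-length, and verification reduces to evaluating $\omega_\mathbf{q}$ via $f$, running a Dijkstra computation of $T(c,s)$ for every centroid-destination pair (as already argued in \S\ref{sec:accessibility}), sorting the resulting inaccessibilities to identify $\mathcal{C}_{10}^\text{worst}$ and $\mathcal{C}_{40}^\text{best}$, and evaluating the Palma ratio --- all polynomial in the size of $G$.

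For NP-hardness, I would exhibit a polynomial reduction from knapsack. From an instance $(d_1,\ldots,d_n,u_1,\ldots,u_n,D,U)$, I would build a graph with $n$ ``item'' stops $s_1,\ldots,s_n\in\mathcal{S}^\text{DRT}$, each carrying a single stoptime $v_i\in\mathcal{ST}(s_i)$ and a single centroid $c_i\in\mathcal{C}(s_i)$ linked to $v_i$ by a DRT-arc of baseline weight $B$, where $B\gg\max_i u_i$. With $\mathcal{D}(c_i):=\{s_i\}$, the inaccessibility of $c_i$ equals this arc weight. I would define $f((c_i,v_i),q):=u_i$ if $q\ge d_i$ and $0$ otherwise (trivially polynomial-time computable), and take $Q:=D$. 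Since over-allocation wastes budget and under-allocation yields no weight reduction, it is never preferable to allocate $q_{s_i}\notin\{0,d_i\}$, so $J:=\{i:q_{s_i}\ge d_i\}$ is in one-to-one correspondence with knapsack selections and satisfies $\sum_{i\in J} d_i\le D$ by construction.

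The delicate step is engineering the rest of the graph so that the Palma ratio becomes an affine function of $\sum_{i\in J} u_i$. To this end I would pad with two disjoint groups of non-DRT centroids whose $ia$-values are invariant under $\mathbf{q}$: a ``top'' group of $4n$ centroids with small fixed inaccessibility $\epsilon>0$, and an ``ambient'' group of $5n$ centroids with inaccessibility strictly between $\epsilon$ and $B-\max_i u_i$. Since each item centroid satisfies $B-u_i\le ia(c_i)\le B$, choosing $B$ large enough forces $\{c_1,\ldots,c_n\}$ to be the worst $10\%$ for every feasible $\mathbf{q}$, while the top group forms the best $40\%$. Hence $\mathcal{C}_{10}^\text{worst}$ and $\mathcal{C}_{40}^\text{best}$ are fixed, the denominator of $I$ is the constant $\Delta:=4n\epsilon$, and the numerator equals $nB-\sum_{i\in J} u_i$. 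Setting $K:=(nB-U)/\Delta$ then makes ``$I\le K$'' equivalent to ``$\sum_{i\in J} u_i\ge U$'', so knapsack reduces to (EQ) in polynomial time.

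The principal obstacle is exactly this calibration. The Palma index is defined through top-$10\%$/top-$40\%$ selections, which are inherently nonlinear in $\mathbf{q}$; without the inaccessibility-gap argument, centroids could move in and out of $\mathcal{C}_{10}^\text{worst}$ or $\mathcal{C}_{40}^\text{best}$ as $\mathbf{q}$ varies, destroying the bijection with knapsack solutions. Once invariance of these two sets is pinned down, correctness of the reduction is a routine algebraic check.
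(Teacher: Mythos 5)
Your proposal is correct and takes essentially the same route as the paper's proof: a polynomial reduction from knapsack in which $n$ DRT-served ``item'' centroids (with a thresholded $f$ forcing canonical allocations $q_{s_i}\in\{0,d_i\}$) are calibrated to form the fixed worst $10\%$, while padding centroids of constant, $\mathbf{q}$-invariant inaccessibility pin down the best $40\%$, making the Palma index affine in $\sum_{i\in J}u_i$. The only differences are cosmetic: you split the $9n$ padding centroids into two tiers where the paper gives them all inaccessibility $1$, and your choice $f((c_i,v_i),q)=u_i$ for $q\ge d_i$ is the internally consistent variant of the paper's construction (whose printed $f$-value $M-u_i$ does not match its claimed resulting inaccessibility $ia(c_i)=M-u_i$).
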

\begin{proof}
We associate a rational $IN$ to the problem (EQ) and search for the existence of a feasible assignment $\mathbf{q}$ with inequality $\inequality$ lower than $IN$. The inclusion in NP is trivial. Considering that $f$ can be computed in polynomial time, one can compute $\inequality$ in polynomial time, for any given assignment $\mathbf{q}$.

We now prove there exists a polynomial time reduction from the knapsack problem to (EQ). From a knapsack instance $\mathcal{J}$, we build an instance $\mathcal{I} = (G, \omega, f, Q)$ as follows:


\hspace{-1em}
\fbox {
\hspace{-2em}
\parbox{\linewidth}{
\begin{small}
\begin{itemize}
\item We set an integer $M$ to $\max(u_i + 2)$.
\item We construct the graph $G$ containing $n+1$ stops $s_1, s_2, \dots, s_{n}, s_{n+1}$, $n + 1$ stoptimes $v_1, v_2, \dots, v_{n+1}$ and $10n$ centroids $c_1, c_2, \dots, c_{n}, c'_{n+1}, c'_{n+2}, \dots, c'_{10n}$. 
\item We set $\mathcal{C}(s_i) = \{c_i\}$, $\mathcal{ST}(s_i) = \{v_i\}$, and $\mathcal{S}^{DRT} = \{s_1, s_2, \dots, s_n\}$.
\item For $i \in \llbracket 1; n \rrbracket$, each centroid $c_i$ is linked to the stoptime $v_i$ with one arc $(c_i, v_i)^w \in \mathcal{A}^w$ and another arc $(c_i, v_i)^{DRT} \in \mathcal{A}^{DRT}$. For $i \in \llbracket n+1; 10n \rrbracket$, each centroid $c'_i$ is linked to the stoptime $v_{n+1}$ with only one arc $(c'_i, v_{n+1})^w \in \mathcal{A}^w$. In addition, $\mathcal{A}$ contains the arcs $(v_i, s_i)$, $(v_i, v_{n + 1})$ and $(v_{n + 1}, v_i)$, for $i \neq n + 1$.
\item For $i \in \llbracket 1; n \rrbracket$, $\omega((c_i, v_i)^w) = \omega((c_i, v_i)^{DRT}) = M$. For $i \in \llbracket n+1; 10n \rrbracket$, $\omega((c'_i, v_{n+1})^w) = 1$. Other arcs are weighted with 0.
\item For $i \in \llbracket 1; n \rrbracket$, $f((c_i, v_i), q) = 0$ if $q < d_i$, and $M - u_i$ otherwise.
\item Finally, we set $Q = D$ and $IN = \frac{nM - U}{4n}$.
\end{itemize}
\end{small}
}
}
\vspace{0.5em}

This transformation from $\mathcal{J}$ to $\mathcal{I}$ is polynomial. 

Note that we can consider only the feasible assignments $\mathbf{q}$ of $\mathcal{I}$, where $q_{s_i}$ is either 0 or $d_i$, for every $i \leq n$. We call such a solution a canonical assignment. For any non canonical assignment, there exists a canonical one with same Palma index, since by construction $f((c_i, v_i), q)$ is null on $\llbracket 0; d_i - 1 \rrbracket$ and constant on $\llbracket d_i; Q \rrbracket$. In addition, there exists a bijection from the canonical assignments to the subsets of $\llbracket 1; n \rrbracket$ satisfying $\sum_{i \in J} d_i \leq D$: from $\mathbf{q}$, we can define $J \subset \llbracket 1; n \rrbracket$ as the set $\{i | q_{v_i} \geq d_i \}$. Thus, the associated canonical assignment has the inaccessibility values: $ia(c'_{i}) = 1$ if $i \geq n + 1$, $ia(c_i) = M$, if $i \not\in J$, and $ia(c_i) = M - u_i$, otherwise. Note that the two last values are greater than 2: $\mathcal{C}_{10}^\text{worst} = \{c_1, c_2, \dots, c_n\}$ and $\mathcal{C}_{40}^\text{best} = \{c'_{6n+1}, c'_{6n+2}, \dots, c'_{10n}\}$. Thus, $\inequality = nM - \sum_{i \in J}u_i / 4n$. We then immediately have that $\sum_{j \in J} u_i \geq U$ if and only if $\inequality \leq IN$.

Then, there exists a polynomial time reduction from the knapsack problem to (EQ).
\end{proof} 
Note that this reduction could be adapted to any fixed network. Changing the network would add a fixed value to the accessibility of all the centroids that can be balanced with the weights of the arcs and the function $f$. However this idea is not immediately achievable if we consider realistic functions $f$ where the value $f(v, q)$ should increase continuously with $q$, possibly with a threshold representing the number of vehicles needed before the driving time takes an advantage on the walking time. Further work is then needed to search for the complexity of this problem in realistic instances. 

\subsection{Graph model of multimodal public transit}
\label{sec:graph}

\subsubsection{Graph model of fixed public transit}
We now render the generic graph model from \S\ref{sec:model} more specific. We detail the construction of the arcs and weights and how such concepts map to GTFS data. Let us denote with $\mathcal{L}$ and with $\mathcal{R}_l$ the set of \emph{runs}, each corresponding to a vehicle traveling along the line and departing from the terminal at a specific time. GTFS data associate a stoptime $v=(s,r)$ the event of a run $r$, of a line $l$, passing by physical stop $s$. We denote with $t(v)$ the time in which this even happens. In our graph model stoptimes are represented as nodes. The stops times related to a physical stop and line $l$ serving that stop are 
\begin{align}
\nonumber
\mathcal{ST}(s,l) &=\{v=(s,r)|r\in\mathcal{R}^l.\} 
\\
\nonumber
\mathcal{ST}(s) &=\bigcup_{l\in\mathcal{L}} \mathcal{ST}(s,l).
\end{align}

We add arcs $a$ between stoptimes $v$ and $v'$ based on the following rules. An arc $a$ is added between stoptime $v$ and $v'$ if there exists a run (of any line) that serves $v$ and then $v'$ immediately after. Take a stoptime $v\in\mathcal{ST}(s,l)$. We add the following set of arcs, that allow to change from a line to another at stop $s$:
\begin{equation}
\label{eq:stoptime-arcs}
\left\{
    a=(v,v_{l'}) | l'\in\mathcal{L}\text{ serving }s, v_{l'}\in\mathcal{ST}(s,l'), v_{l'} = \underset{v'\in\mathcal{ST}(s,l')}{\arg\min} t(v')
\right\}.
\end{equation}

To let paths end to physical stops, we add the following arcs:
\begin{equation}
\label{eq:zero-arcs}
    \left\{
    a=(v,s) | v\in\mathcal{ST}(s), s\in\mathcal{S}
\right\}
\end{equation}

To define the centroids $\mathcal{C}$, we tessellate the area into square tiles of side $\xi$. The centroids are the centers of such tiles. We consider journeys start at time $t_0$. The time at which a user can reach stop $s$ is $t^w_{c,s}=t_0+d(c,s)/v^w$, where $v^w$ is the walking speed and $d(c,s)$ is the distance between centroid $c$ and stop $s$. We connect each centroid $c$ with \emph{walk arcs} to stoptimes within walk distance $\tau$ that can be reached before the time of service:
\begin{align}
\label{eq:walk-arcs}
\left\{
\begin{array}{ll}
    a=(c,v)^w |  &
     l\in\mathcal{L}\text{ serves }s, v\in\mathcal{ST}(s,l), \\
     & 
     v = \underset{v'\in\mathcal{ST}(s,l), t_{c,s}^w\le t(v')}{\arg\min} t(v'),
        s\in\mathcal{S}, d(c,s)\le \tau
\end{array}
\right\},
\end{align}
We compute arc weights as follows. The arcs defined in~\eqref{eq:zero-arcs} have weight $0$. Arcs $a=(c,v)^w$ defined in~\eqref{eq:zero-arcs} have weight $w(a)=t(v)-t_0$. Arcs $a=(v,v_{l'})$ defined in~\eqref{eq:stoptime-arcs} have weight $w(a)=t(v_{l'})-t(v)$.

\subsubsection{Model of Demand-Responsive service}
When deploying the DRT feeder bus for stop $s\in\mathcal{S}^\text{DRT}$, such service spans a rectangular area of size $L\times W$ on the right of stop $s$, as in~\cite{QUADRIFOGLIO2009}. $\mathcal{C}(s)$ is the set of centroids within such an area. The DRT is $\emph{cyclic}$: during each cycle, a bus starts from $s$, picks-up/drops-off passengers in/to the area and returns to $s$. It thus stops for 2 minutes before starting the next cycle. A new bus departs every $h(s)$ units of time (headway). We apply the continuous approximation model of~\cite{Calabro2021}, of which we give here only minimum information, for lack of space.  The expected time needed to complete a cycle is $T(s)$, which depends on the number of passengers departing from/arriving to the region. We assume demand density $\rho$ (passengers willing to be picked-up - we assume the same density of passengers to be dropped-off) is uniform across the entire urban area. The higher $\rho$, the higher the detours imposed to the DRT, the higher $T(s)$. On average, the time needed to go from a centroid to stop $s$ via DRT is $T(s)/2$. Therefore, a passenger can arrive at stop $s$ at time $t_{c,s}^\text{DRT}=t_0+T(s)/2$. We thus add the following \emph{DRT arcs}:
\begin{small}
\begin{align}
\nonumber
\left\{
\begin{array}{ll}
    a=(c,v)^\text{DRT} |  &
     l\in\mathcal{L}\text{ serves }s, v\in\mathcal{ST}(s,l), \\
     & 
     v = \underset{v'\in\mathcal{ST}(s,l), t_{c,s}^\text{DRT}\le t(v')}{\arg\min} t(v'),
        s\in\mathcal{S}^\text{DRT}, c\in\mathcal{C}(s)
\end{array}
\right\}
\end{align}
\end{small}

As usual, the weight on such arcs $a=(c,v)^\text{DRT}$ are $t(v)-t(c)$. Imposing a certain headway $h$ requires to deploy a certain number of vehicles $M_h$ (see~\cite{Calabro2021} for the formula).

\subsubsection{Shortest path}
Note that after the construction explained above, graph $G$ models multimodal PT, as it includes both fixed PT and DRT services. We can compute shortest path $\hat T(c,s)$ from any centroid $c$ to any stop $s$, using Dijkstra algorithm or others. In case no path exist, we set $\hat T(c,s)=\infty$. To compute accessibility (\S\ref{sec:accessibility}), we use
$T(c,s)=\min\{\hat T(c,s), d(c,s)/v^w\}$, i.e., we do not consider PT paths that would be longer than walking.

\subsection{Graph Database Model}

To store and process transportation data, we leverage the leading open-source Neo4j graph database. Its \emph{property graph} model consists of a labeled, directed multigraph, in which lists of properties (key-value pairs) can be attached to both nodes and edges. For querying, its Cypher language allows to extract label-constrained reachability information encoding complex graph patterns \cite{BD18}. The Neo4j Graph Data Science (GDS) library contains efficient implementations of common graph algorithms, useful for analytics.

We build our transportation graph model using GTFS data, a standard open data format detailing transit schedules. Specifically, we import the information corresponding to the Royan urban region in Neo4j, as part of a custom graph instance we construct. 

We model the urban transport network as a property graph instance (see Figure \ref{fig:g1instance}). \texttt{Centroid} nodes (tessellation centroids) are linked to \texttt{Stoptime} nodes, which represent scheduled passage times and correspond to specific \texttt{Stop} nodes (stations), as indicated by \texttt{Located\_at} arcs. The arcs connecting a given \texttt{Centroid} and \texttt{Stoptime} are labeled \texttt{Walk} and/or \texttt{DRT}, depending on the pedestrian and, respectively, demand-responsive accessibility. \texttt{Stoptime} nodes are connected with \texttt{Precedes}-labeled arcs (temporal ordering) or with \texttt{Correspondance}-labeled arcs (correspondance, using a different line, between a pair of stoptimes of the same stop) 
The full graph instance amounts to 18892 nodes and 73714 relations (arcs), additionally annotated with 16 different node properties and 6 different relation properties. Note that the node values in Figure \ref{fig:g1instance}-right correspond to the identifiers of the corresponding \texttt{Stop} (also propagated to its stoptimes) or \texttt{Centroid} node.
                
\begin{figure}
    \centering
    \includegraphics[width=0.6\linewidth]{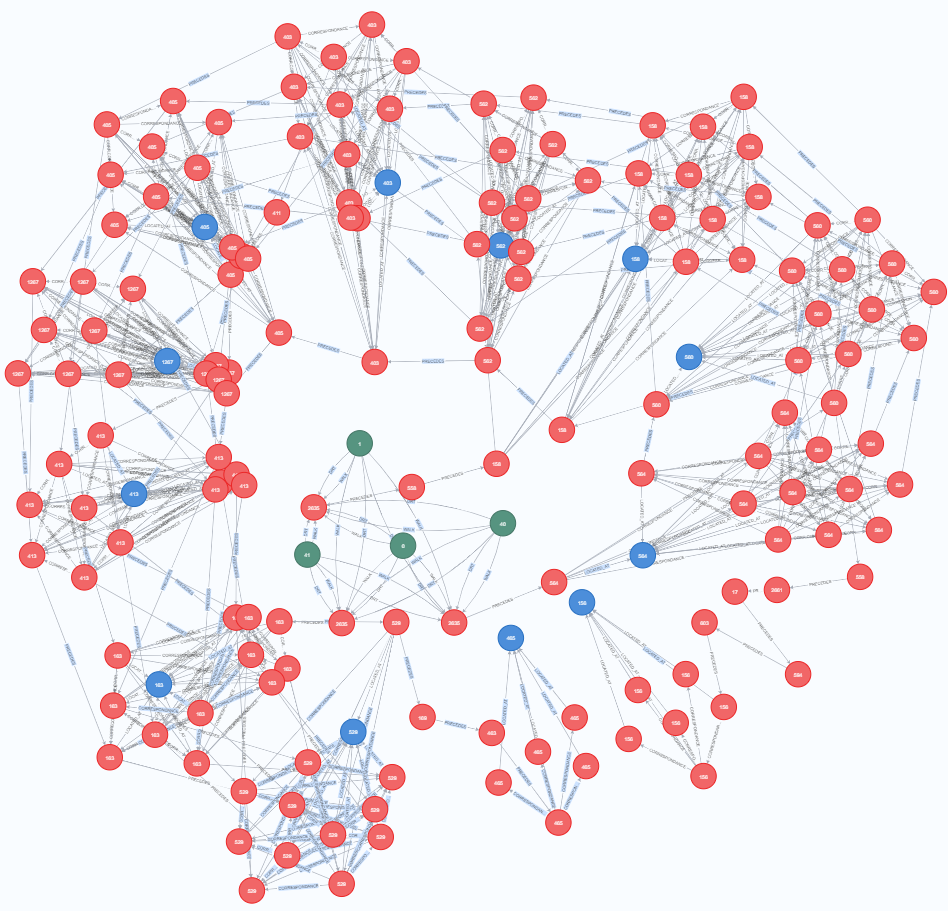}
    \includegraphics[width=0.6\linewidth,angle=90]{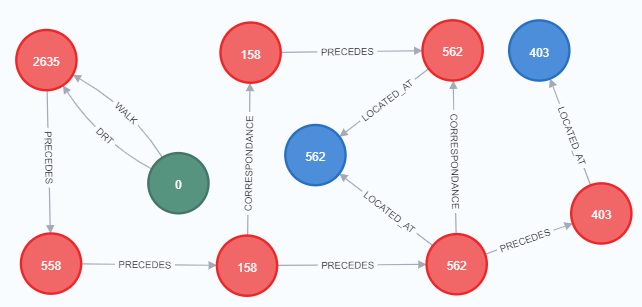}
    \setlength{\belowcaptionskip}{-15pt}
    \caption{Left: Neo4j snapshot of a portion of Royan's PT graph instance; the node colors encode their label: green (\texttt{Centroid}), red (\texttt{Stoptime}), and blue (\texttt{Stop}). Right: Detail.}
    \label{fig:g1instance}
\end{figure}

For our analysis, we enrich the property graph instance with accessibility information, stored using custom edge properties. We illustrate an example of this with the Cypher query below, in which we compute the shortest path between a given centroid and a stop. This uses the built-in Dijkstra's source-target shortest path algorithm from Neo4j's GDS library, weighted by the \texttt{inter\_time}, representing the walk time from a centroid to a station and the additional wait time at the station. The following example is a query that computes the shortest path from centroid $c=0$ to stop $s=607$.


\begin{lstlisting} [ basicstyle=\footnotesize,
label={lst:cypher},language=cypher]
MATCH (s:Centroid), (t:Stop) WHERE s.centroid_id = 0 AND t.stop_id = '607'
CALL gds.shortestPath.dijkstra.stream ('graph', {sourceNode: id(s),
        targetNode: id(t), relationshipWeightProperty: 'inter_time'})
YIELD sourceNode, targetNode, totalCost, nodeIds, path
CALL apoc.algo.cover(nodeIds) YIELD rel 
WITH startNode(rel) AS a, endNode(rel) AS b, rel AS r, 
     path AS p, totalCost AS tCost
RETURN p, a.stop_id AS from, b.stop_id AS to, r.inter_time AS inter_times,
r.walking_time AS walking_time, r.waiting_time AS  waiting_DRT,
r.travel_time AS DRT_time, tCost AS totaltime
\end{lstlisting}

\vspace{-0.2cm}

\section{Numerical results}
We showcase our approach on the French town of Royan, counting 749 PT bus stops. We select 20 $\mathcal{S}^\text{DRT}$ stops to be served by DRT (bold ones in Figure~\ref{fig:carte}-left). The considered numerical values are in Table~\ref{tab:notation}. 
We measure the reduction in the inaccessibility of centroids when introducing the DRT service, with headway $h=4$ min (resulting in 21.4 vehicles and 27.7 passengers per vehicle).
We plot in Figure~\ref{fig:carte}-right the reduction of inaccessibility observed by the centroids in $\mathcal{C}(s),\forall s\in\mathcal{S}^{DRT}$: only some centroids benefit from DRT, which suggests that it is of capital importance to solve the assignment problem (Def.~\ref{def:problem}) to not waste resources with no benefits.

In Figure~\ref{fig:explanation}-left, we plot the reduction of inaccessibility against $\Delta(c)$, the increase of DRT usage for each centroid $c$. $\Delta(c)$ is the amount of potential destinations for which the shortest path uses DRT, when available. As expected, the more $\Delta(c)$, the more inaccessibility is reduced.

Some features of stops $s\in\mathcal{S}^\text{DRT}$ may suggest where benefit from introducing DRT can be expected to be higher. In Figure~\ref{fig:explanation}-right, we analyze the \emph{proximity}, i.e., the number of other stops within a distance of 5km from $s$. We observe that the benefits of introducing DRT are, as expected, more pronounced when DRT serves PT stops that are close enough to other stops. In such cases indeed, passengers can arrive more easily to $s$ and from $s$ they can reach more potential destinations. Such consideration may help construct efficient heuristics to solve the assignment problem.

\begin{figure}[ht!]
\vspace{-1em}
    \centering
    \includegraphics[width=0.48\columnwidth]{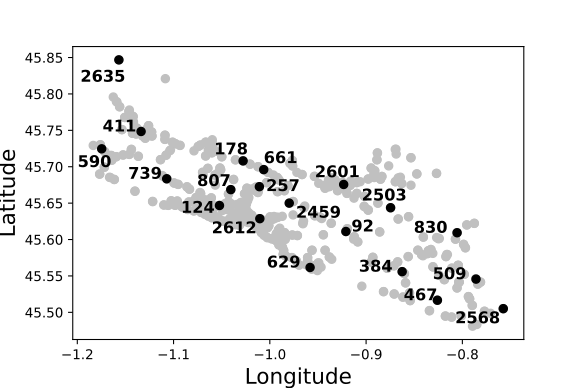}
    \setlength{\belowcaptionskip}{-13pt}
    \includegraphics[width=0.48\columnwidth]{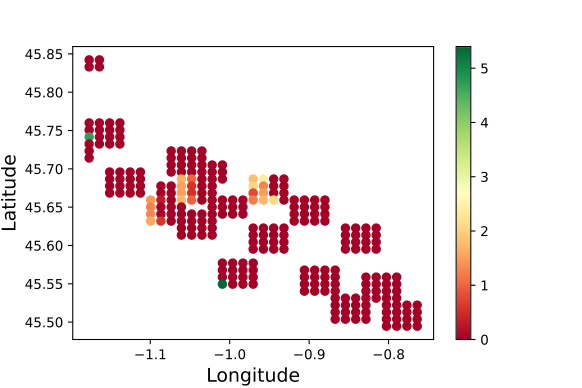}
    \vspace{-1em}
s    \caption{Left: stops $\mathcal{S}$ (in bold $\mathcal{S}^\text{DRT}$). Right: reduction in inaccessibility (log), when introducing the DRT service; $\text{inequality}_{\text{no DRT}}=\num{0.3446}$, $\text{inequality}_{\text{with DRT}}=\num{0.3443}$ (\S\ref{sec:accessibility}).}
    \label{fig:carte}
\end{figure}
\begin{figure}[ht!]
    \centering
    \includegraphics[width=0.45\columnwidth]{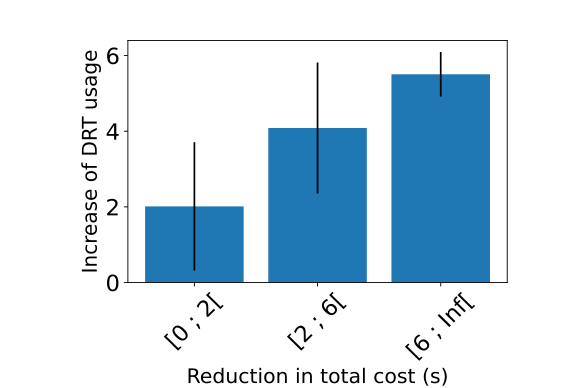}
    \includegraphics[width=0.45\columnwidth]{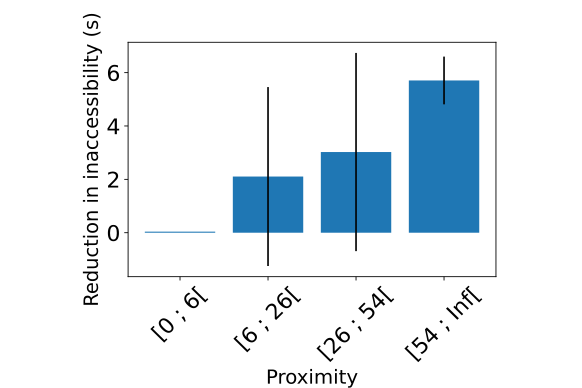}
    \vspace{-1em}
    \caption{Left: Relation between DRT usage (log) and inaccessibility reduction. Right: Relation between proximity and inaccessibility reduction (log). Centroids are partitioned in bins based on the x-axis, whose intervals are specified in the x-tics. Bars represent the average across centroids on each bin and standard deviation.}
    \label{fig:explanation}
    \vspace{-1.5em}
\end{figure}


\section{Conclusion and future work}
\label{sec:conclusion}
In this work, we set the bases of a computational approach to quantifying the impact that deploying DRT services, in a real network, has on the accessibility distribution. We show that optimally deploying DRT vehicles to this aim is NP-complete. The current work has three main limitations that we are currently addressing to make it usable for planners. First, we assume that PT demand is inelastic, as we keep demand density $\rho$ constant, while in reality the passengers choosing DRT depend on its performance that, in turn, depends on the number of passengers. To solve this circular dependency, we are currently working on solving a Traffic Assignment Problem~(TAP). Second, we are currently including in our model \emph{opportunities}, i.e., shops, jobs, schools, etc., which can be retrieved from  OpenStreetMap points of interest. We will then compute accessibility as the ease of reaching such opportunities, in accordance to the literature~\cite{Miller2019}, instead of considering the ``potential destinations'' $\mathcal{D}(c)$ (Table~\ref{tab:notation}). Third, we need to repeat our analysis in cities of different sizes to assess its scalability.

\vspace{-0.2cm}

\bibliographystyle{abbrv}
\bibliography{references}


\end{document}